\documentclass[11pt, reco]{article}
\usepackage{amssymb,amscd,amsmath,amsthm}
\usepackage{graphicx}
\textwidth 160mm \textheight 220mm \topmargin -5mm \oddsidemargin
5mm \evensidemargin -5mm \baselineskip+6pt

\graphicspath{ {./images/} }
\newtheorem{theorem}{Theorem}[section]

\newtheorem{lemma}[theorem]{Lemma}
\newtheorem{definition}[theorem]{Definition}

\newtheorem{remark}[theorem]{Remark}

\def\ce{{\mathcal E}}

\def\bc{{\mathbb C}}

\def\bn{{\mathbb N}}

\def\a{\alpha}
\def\b{\beta}

\def\tr{{\rm Tr}}
\def\L{\Lambda}
\def\G{\Gamma}
\def\ce{\mathcal E}

\def\ffi{\varphi}

\def\Tr{\mathrm{Tr}}
\def\<{\langle}
\def\>{\rangle}

\def\1{\mathbf{1}}

\def\cal{\mathcal}

\def\id{{\bf 1}\!\!{\rm I}}
\begin{document}

\title{Refinement of quantum Markov states on trees }

\begin{center}
{\Large {\bf Refinement of quantum Markov states on trees}}\\[1cm]
\end{center}

\begin{center}
{\large {\sc Farrukh Mukhamedov$^a$}}\\[2mm]
\textit{
$^a$Department of Mathematical Sciences,\\
College of Science, United Arab Emirates University,  \\
P.O. Box 15551,Al Ain, Abu Dhabi, UAE}\\
E-mail: {\tt far75m@yandex.ru, \ farrukh.m@uaeu.ac.ae}
\end{center}

\begin{center}
{\sc Abdessatar Souissi$^{bc}$}\\
\textit{
$^b$ Department of Accounting, College of Business Management\\
Qassim University, Ar Rass, Saudi Arabia \\
$^c$ Preparatory institute for scientific and technical studies,\\
 Carthage University, La Marsa, Tunisia}\\
E-mail: {\tt a.souaissi@qu.edu.sa}\\

\end{center}

\begin{abstract}
In the present paper, we propose a refinement for the notion of quantum Markov states (QMS) on trees. A structure theorem for QMS on general trees is proved. We notice that any restriction of QMS in the sense of Ref. \cite{AccFid03} is not necessarily to be a QMS. It turns out that localized QMS has the mentioned property which is called \textit{sub-Markov states}, this allows us to characterize translation invariant QMS on regular trees.
\vskip 0.3cm \noindent {\it Mathematics Subject Classification}:
46L53, 60J99, 46L60, 60G50.\\
{\it Key words}: Quantum Markov state; Cayley tree;
uniqueness.
\end{abstract}

\section{Introduction}\label{sec_intro}

The study of  quantum many-body systems has lived an explosion of results. This is specifically true in the field of Tensor
Networks. Recent studies show that  "Matrix Product States" and more generally  "Tensor Network States" play a crucial role in the description of the whole quantum system under consideration \cite{CV,Or}. This approach is based on the density matrix renormalization group (DMRG) algorithm.

On the other hand a physically interesting mathematical approach to quantum states on tensor networks was proposed by Accardi through introducing quantum Markov chains on tensor product of matrix algebras \cite{[Ac74f],Ac75,GZ}.  Since then quantum Markov chains have found a  great progress  and a number of applications in several research domains: computational physics where they are called \textit{Bethe ansatz states} \cite{[RoOs96]},
   spin models in quantum statistical physics \cite{F.N.W},  \cite{[Mohari18]} where they are called \textit{finitely correlated states},
interacting particle systems \cite{[AcKo00b-QIP]}, quantum information \cite{[CiPe-Ga-Sc-Ver17]}
where they are called \textit{matrix product states}, quantum random walks \cite{[AcWat87]},
\cite{WhiRoRoAspGu10}, cognitive sciences \cite{[AcKhreOhy09]}.

On the other hand, in \cite{MBS161,MBS162,MR2004,MR2005} a particular class of
quantum Markov chains (QMC) associated to the Ising types models on the Cayley
trees have been explored (see \cite{BR19,GMM09,RR17,Roz} for recent development on models over such trees). It turned out that the
above considered QMCs fall to a special class called
\textit{quantum Markov states (QMS)} (see \cite{MS19,MS20}). Furthermore, in \cite{FM,MS19,MS20} a description of
QMS has been carried out. It is worth to indicate that introduced
QMS were considered over the Cayley trees, and investigated the
Markov property not only with respect to levels of the considered tree, but
also with regard to the interaction domain at each site, which has a finer structure, and through a family of suitable
quasi-conditional expectations so-called \textit{localized}
\cite{MS19}. Such a localization property is essential for the integral decomposition of QMS, since takes into
account finer structure of conditional expectations and
filtration. If one considers conditional expectations without localization
property, then the results of \cite{AccFid03} can be applied to the considered QMS and one can get the
disintegration of QMS, which would be not enough for its finer representation.
 In the present paper, we are going to investigate such states from conceptual point of view, i.e., we aim
to define quantum Markov states (QMS) on the finer structure of a tree graph as a refinement of our previous works \cite{MS19}, \cite{MS20}.
It is stressed that the considered quantum Markov states do not have one-dimensional analogues, hence results of \cite{AccFid03} are not applicable. We notice that types of von Neumann algebras generated by QMS have been investigated in \cite{FM,M04,MBS162,MR2004,MS201}.

We point out that the present work is another step towards one of the most important open problems in quantum probability, which concerns the construction of a satisfactory theory of quantum Markov fields. This problem relates to an extension of the Dobrushin Markov fields \cite{D}
to the quantum setting. First attempts regarding this goal have been done in \cite{AOM}, \cite{[AMSo]}.
In this direction, quantum Markov chains on trees and
  their applications to quantum phase transition phenomena for concrete models were explored in an increasing number of works
  (see for instance \cite{AccMuSa1,AccMuSa2,AccMuSa3}, \cite{MBS161,MBS162,MBSS,MGS17,MGS19}).

In the present paper, we provide a conceptual new definition of QMS on trees that refines the definition introduced in \cite{MS19}. This allows to  introduce a notion of translation invariance for QMS on regular trees, known as Cayley trees (also Bethe lattice) \cite{Ost}.
We prove a structure theorem for QMS on trees extending a result of \cite{[AcSouElG20]}. We notice that any restriction of QMS in the sense of Definition \ref{QMS1}, is not necessarily to be QMS. It turns out that localized QMS has the mentioned property which is called \textit{sub-Markov states}, which allows us to characterize translation invariant QMS on regular trees.

It is stressed that the present work opens a new perspective for the generalization of  many interesting results related to one dimensional quantum Markov states and chains to multi-dimensional cases. Namely, the entropy of QMC \cite{AOW,Fid01,OW19,Park,PO} was established for translation invariant quantum Markov states. An other interesting problem concerns the open quantum random walks on trees as generalization of QMC \cite{DKY19,DM19}, \cite{Attal12}, \cite{DM18}.

Let us mention the outlines of this paper. After preliminaries notions on trees in section \ref{sec_prel}. Section \ref{sec_QMCS_tree} is devoted to  quantum Markov chains and states on trees. In section \ref{sec_stru_QMS} we prove a structure theorem for QMS on general tree graphs. Section \ref{sec_SubMS} is devoted to the second main result of the paper which concerns a characterization of translation invariant quantum Markov states on Cayley trees.

\section{Preliminaries}\label{sec_prel}

Let $T = (V,E)$ be a locally finite tree. We fix a root $o\in  V$.  Two vertices $x$ and $y$ are  {\it nearest neighbors} (denoted $x\sim y$ ) if they are joined through an edge (i.e. $<x,y>\in E$). A list $x\sim x_1\sim \dots \sim x_{d-1}\sim y$ of vertices is called a {\it
path} from $x$ to $y$. The distance on the tree $d(x,y)$  is the length of the shortest path from $x$ to $y$.
The set of  its \textit{direct successors}  of a given vertex $x\in V$  is defined  by
\begin{equation}\label{S(x)def}
S(x) :  = \left\{y\in V \, \,  : \, \,  x\sim y \, \, \hbox{and} \, \, d(y,o) > d(x,o) \right\}
\end{equation}
and its $k^{th}$ successors w.r.t. the root $o$  is defined by induction as follows
$$
S_1(x) := S(x);
$$
$$
  S_{k+1}(x) = S(S_k(x)),\, \,   \forall k\ge 1.
$$
The "future" w.r.t. the vertex $x$ is defined by:
\begin{equation}\label{S_1n_S_infty}
  S_{[m,n]}(x)= \bigcup_{k=m}^{n}S_{k}(x); \quad    T(x) = \bigcup_{k\ge1}S_{k}(x); \quad T^{'}(x) = T(x)\setminus \{x\}.
\end{equation}
 In the  homogeneous case ($|S(x)|=k$ is constant w.r.t. the vertex $x$), the graph $T$ coincides with  the semi-infinite Cayley tree $\Gamma^k_+$ of order $k$. In particular, if $k=1$  the graph is reduced to the one-side integer lattice $\mathbb N$.

Let $x\in V$. If $o=x_0\sim x_1 \sim \cdots x_n =x$ is the unique edge-path with minimal length joining $o$ and $x$,  the set
\begin{equation}\label{P(x)}
P(x):= \{x_0, x_1, \cdots, x_{n-1} \}
\end{equation}
 represents the set of "predecessors" of the vertex $x$ w.r.t. the root $o$. The author is referred to \cite{AS20} for a more detailed description of the hierarchical structure of rooted trees.

 Define
  \[\Lambda_n = S_n(o); \quad \Lambda_{[n,m]} := S_{[n,m]}(0);\quad \Lambda_{[0,n]} := S_{[0,n]}(0).\]
To each vertex $x$, we associate a $C^*$--algebra $\mathcal{A}_x$ with identity $\id_x$. For a given bounded region $\Lambda$, we consider the algebra $\mathcal{A}_{\Lambda} = \bigotimes_{x\in \Lambda}\mathcal{A}_x$. One can consider the following embedding
$$
\mathcal{A}_{\Lambda_{[0,n]}}\equiv  \mathcal{A}_{\Lambda_{[0,n]}}\otimes\id_{\Lambda_{n+1}}\subset \mathcal{A}_{\Lambda_{[0,n+1]}}.
$$
The algebra $\mathcal{A}_{\Lambda_{[0,n]}}$ can be viewed as a  subalgebra of $\mathcal{A}_{\Lambda_{[0,n+1]}}$. It follows the following quasi-local algebra.
\begin{equation}\label{AVloc}
  \mathcal{A}_{V;\, loc} := \bigcup_{n\in\mathbb{N}}\mathcal{A}_{[0,n]}
\end{equation}
and the quasi-local algebra
$$
\mathcal{A}_V := \overline{\mathcal{A}_{V;\, loc}}^{C^*}.
$$
The set of states on a $C^*$--algebra $\mathcal{A}$ will be denoted by $\mathcal{S}(\mathcal{A})$.

\section{ Quantum Markov chains and States on trees}\label{sec_QMCS_tree}

Let us consider a triplet ${\cal C} \subset {\cal B} \subset {\cal A}$ of
unital $C^*$-algebras. Recall \cite{[Ac74f]} that a {\it
quasi-conditional expectation} with respect to the given triplet
is a completely positive (CP) linear map $\ce \,:\, {\cal A} \to
{\cal B}$ such that $ \ce(ca) = c \ce(a)$, for all $a\in {\cal A},\, c \in {\cal C}$.

\begin{definition}\cite{AccFidMu,AOM}\label{QMCdef}
Let $\varphi$ be a state on $\mathcal{A}_V$. Then $\varphi$ is called a
{\it (backward) quantum Markov chain}, associated with $\{\L_n\}$, if
there exist a quasi-conditional expectation $E_{\Lambda_{[0,n]}}$ with
respect to the triple $\mathcal{A}_{{\Lambda}_{n-1]}}\subseteq \mathcal{A}_{\Lambda_{[0,n]}}\subseteq\mathcal{A}_{\Lambda_{[0,n+1]}}$ for each
$n\in\bn$ and an initial state $\varphi_0\in S(\mathcal{A}_{\L_0})$ such that
\begin{equation}\label{lim_Mc}
\varphi = \lim_{n\to\infty} \varphi_0\circ E_{\Lambda_{0]}}\circ
E_{\Lambda_{1]}} \circ \cdots \circ E_{\Lambda_{[0,n]}}
\end{equation}
in the weak-* topology.
\end{definition}

In \cite{AccFidMu} it was given a general definition of quantum Markov states which can  be adopted to the considered setting as follows.
 \begin{definition}\label{QMS1}
A quantum Markov chain $\ffi $ is said to be quantum Markov state
with respect to the sequence $\{\ce_{\Lambda_{j]}}\}$ of
quasi-conditional expectations if one has
\begin{equation}\label{1Markov_state_eq}
\ffi_{\lceil \mathcal{A}_{\Lambda_{j]}}}\circ \ce_{\Lambda_{j]}}=\ffi_{\lceil \mathcal{A}_{\Lambda_{j+1]}}}, \ \ \forall j\in\bn.
\end{equation}
\end{definition}

Using this definition, in \cite{AccFidMu} non-homogeneous QMS has been characterized. To formulate that result let us recall some notations.

Let us assume that we have a locally faithful state $\ffi$ on the quasi-local algebra $\mathcal{A}_V$. Then a
potential $h_{\L_n}$ is canonically defined for each finite subset $\L_n$  as follows
\begin{equation}\label{HL}
\ffi_{\lceil \mathcal{A}_{\Lambda_{[0,n]}}}=\tr_{\mathcal{A}_{\Lambda_{[0,n]}}}(e^{-h_{\L_n}}\cdot).
\end{equation}
Such a set of potentials $h_{\L_n}$ satisfies normalization conditions
$$
\tr_{\mathcal{A}_{\Lambda_{[0,n]}}}(e^{-h_{\L_n}})=1.
$$

Next result has been formulated and proved for trees in \cite{MS19}.

 \begin{theorem}\label{AFt}\cite{AccFidMu}
 Let $\ffi$ be a locally faithful state on  $\mathcal{A}_V$. Then the following statements are
 equivalent:
 \begin{enumerate}
 \item[(i)] $\ffi$ is a QMS w.r.t. the sequence  $\{\ce_{\Lambda_{j]}}\}$
 of transition expectations;
\item[(ii)] The sequence of potentials $\{h_{\L_n}\}$ associated to $\ffi$ by \eqref{HL}, can be recovered by
\begin{equation}\label{potentialequation}
 h_{\L_n}= H_{W_0} + \sum_{j=0}^{n-1}H_{W_j,W_{j+1}} + \hat H_{W_n}
 \end{equation}
where the sequences $\{H_{W_{j}}\}_{j\geq
0},\;\{\hat H_{W_{j}}\}_{j\geq 0}$
and $\{H_{W_{j}, W_{j+1}}\}_{j\geq 0}$ of self-adjoint operators localized in $\mathcal{A}_{W_j}$ and
$\mathcal{A}_{\L_{j,j+1}}$,
 respectively, and satisfying commutation relations
\begin{eqnarray}\label{commutation}
&&[H_{W_n}, H_{W_n,W_{n+1}}]=0,\quad  [ H_{W_n,W_{n+1}},
\hat H_{W_{n+1}}]=0,\nonumber\\ &&[H_{W_n}, \widehat{H}_{W_n}]=0,\quad
[H_{W_n,W_{n+1}}, H_{W_{n+1},W_{n+2}}]=0.
 \end{eqnarray}
 \end{enumerate}
 \end{theorem}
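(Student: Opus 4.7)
The plan is to prove the two implications separately. The implication (ii)$\Rightarrow$(i) is a direct construction from the decomposition, while the converse (i)$\Rightarrow$(ii) is an inductive extraction of the potential pieces from the Markov structure.

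For (ii)$\Rightarrow$(i), I would begin by noting that the commutation relations \eqref{commutation} make the product
$$e^{-h_{\Lambda_n}} = e^{-H_{W_0}}\,\prod_{j=0}^{n-1}e^{-H_{W_j,W_{j+1}}}\cdot e^{-\hat H_{W_n}}$$
unambiguous and, by the normalization of $h_{\Lambda_n}$, equal to the density of $\varphi_{\lceil \mathcal{A}_{\Lambda_{[0,n]}}}$. The transition expectation $\mathcal{E}_{\Lambda_{n]}}:\mathcal{A}_{\Lambda_{[0,n+1]}}\to\mathcal{A}_{\Lambda_{[0,n]}}$ would then be defined, up to a $*$-symmetric conjugation ensuring complete positivity, by partial tracing over $W_{n+1}$ against the incremental factor $e^{\hat H_{W_n}-H_{W_n,W_{n+1}}-\hat H_{W_{n+1}}}$. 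The first two identities in \eqref{commutation} make this map well-defined as a CP map valued in $\mathcal{A}_{\Lambda_{[0,n]}}$, while the last two ensure that it fixes $\mathcal{A}_{\Lambda_{n-1]}}$ pointwise, hence qualifies as a quasi-conditional expectation with respect to the required triple. The Markov equality \eqref{1Markov_state_eq} is then a direct partial-trace computation using the telescoping form of $e^{-h_{\Lambda_{n+1}}}$.

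For (i)$\Rightarrow$(ii), I would argue inductively on $n$. Local faithfulness provides the densities $e^{-h_{\Lambda_n}}$, and dualizing \eqref{1Markov_state_eq} under the finite trace produces an operator relation between $e^{-h_{\Lambda_{n+1}}}$ and $e^{-h_{\Lambda_n}}$ involving the cocycle that implements $\mathcal{E}_{\Lambda_{n]}}$. The quasi-conditional property of $\mathcal{E}_{\Lambda_{n]}}$, namely its trivial action on $\mathcal{A}_{\Lambda_{n-1]}}$, forces this cocycle to be localized in $\mathcal{A}_{W_n}\otimes\mathcal{A}_{W_{n+1}}$ and to commute with $\mathcal{A}_{\Lambda_{n-1]}}$. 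Passing to logarithms and comparing with the inductive form of $h_{\Lambda_n}$, one extracts the interaction $H_{W_n,W_{n+1}}$ and the new boundary $\hat H_{W_{n+1}}$, while $\hat H_{W_n}$ is reabsorbed into the updated decomposition. The commutation relations \eqref{commutation} record the compatibility constraints between these operators and their neighbors in the inductive step.

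The main obstacle is the extraction step in (i)$\Rightarrow$(ii): one must translate the algebraic Markov identity into an \emph{additive} decomposition at the level of self-adjoint operators (logarithms), which is legitimate only when the increment commutes with everything below it. Establishing the commutation relations \eqref{commutation} therefore requires careful bookkeeping of which part of the potential lives in which subalgebra and of how the quasi-conditional expectation at each step acts only on the boundary pieces. Once those relations are in hand, the additive form \eqref{potentialequation} follows by telescoping the inductive increment.
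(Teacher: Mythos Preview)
The paper does not contain a proof of this theorem: it is quoted as a known result from \cite{AccFidMu}, with the tree formulation attributed to \cite{MS19}, and no argument is supplied beyond the citation. There is therefore nothing in the present paper to compare your proposal against.

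That said, your sketch follows the standard line of the Accardi--Frigerio/Accardi--Fidaleo--Mukhamedov proof. For (ii)$\Rightarrow$(i) one defines $\mathcal{E}_{\Lambda_{n]}}$ via conjugation by the increment $e^{-\frac{1}{2}(H_{W_n,W_{n+1}}+\hat H_{W_{n+1}})}$ followed by the partial trace, and the commutation relations \eqref{commutation} are exactly what makes this a quasi-conditional expectation satisfying \eqref{1Markov_state_eq}. For (i)$\Rightarrow$(ii) the key step you allude to---dualizing the Markov identity and using that the quasi-conditional expectation acts trivially on $\mathcal{A}_{\Lambda_{n-1]}}$ to localize the increment in $\mathcal{A}_{W_n}\otimes\mathcal{A}_{W_{n+1}}$---is the correct idea; in the references this is carried out by first passing to a genuine Umegaki conditional expectation onto a subalgebra of $\mathcal{A}_{W_n}$ determined by the modular data, which is what forces the commutation relations to emerge rather than being assumed. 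Your phrase ``passing to logarithms'' hides precisely this point: the increment is not a priori known to commute with what lies below, so one cannot simply take $\log$ of a product; the conditional-expectation structure is what produces the needed commutativity before the logarithm is taken.
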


We stress that one considers QMS with respect to the levels of the tree, then QMS can be associated with potential
given in Theorem \ref{AFt}. However, the definition \ref{QMS1} does not take into account a finer structure of the tree.
This can be seen in the decomposition \eqref{potentialequation} since the terms $H_{W_j,W_{j+1}}$ are not specified.
Recently, in \cite{MS19} we have considered  QMS with finer structure of localized conditional expectations defined on the Cayley trees, then
the localization property \eqref{localiz} allowed us to explicitly find forms of $H_{W_j,W_{j+1}}$ and $\hat H_{W_j}$ in terms of nearest neighbor and competing interactions, namely
\begin{eqnarray}\label{HW1}
  &&H_{W_j,W_{j+1}}=\sum_{x\in W_j}H_{x,\overrightarrow{S(x)}},   \ \ \ \widehat{H}_{W_j}:=\sum_{x\in W_{j}}\widehat{H}_x.
 \end{eqnarray}
We again emphasize that if one considers conditional expectations without localization
property, then the expression is impossible to be obtained. Therefore,  it is natural to provide a conceptual definition QMS
which could cover such finer structure.

 \begin{definition}\label{QMS1}
Let $\ffi $ be a quantum  Markov chain on $\mathcal{A}_V$  w.r.t. a sequence $\{E_{\Lambda_{[0,n]}}\}_n$ of quasi-conditional expectations.
If for each $x\in V$ the restriction
\begin{equation}
E_x := E_{\Lambda_{[0,n]}}\lceil_{\mathcal{A}_{P(x)\cup \{x\}\cup S(x)}}
\end{equation}
defines a quasi-conditional   expectations w.r.t. the triplet $\mathcal{A}_{P(x)}\subset \mathcal{A}_{P(x)\cup \{x\}}\subset \mathcal{A}_{P(x)\cup\{x\}\cup \overrightarrow{S}(x)}$ such that
\begin{equation}\label{Markov_state_eq}
\ffi_{\lceil \mathcal{A}_{\{x\}\cup \overrightarrow{S}(x)}}\circ E_x =\ffi_{\lceil \mathcal{A}_{P(x)\cup\{x\}\cup \overrightarrow{S}(x)}}
\end{equation}
then $\varphi$ is called \textit{(localized) quantum Markov state  w.r.t. the family  $\{E_x \}_{x\in L}$} of local quasi-conditional expectations.
\end{definition}

\begin{remark}
The above definition refines the notion of quantum Markov states on trees introduced in \cite{MS19} for which the Markov property (\ref{Markov_state_eq}) is replaced by \eqref{1Markov_state_eq}.
In the one-dimensional case these two definitions coincide with the usual notion quantum Markov states introduced in \cite{[AcFr80]}.
\end{remark}

\subsection{Why localized quantum Markov states on trees are not reducible to the one dimensional case}

In this section, we are going to demonstrate that localized QMS cannot be reduced to the one dimensional case. More precisely,
we explicitly consider the potential \eqref{HW1}.

In what follows, we consider a semi-infinite
Cayley tree $\G^2_+=(L,E)$ of order two. Our starting
$C^{*}$-algebra is the same $\mathcal{A}_V$ but with $\mathcal{A}_{x}=M_{2}(\bc)$
for all $x\in V$.
By $"\Tr"$ we denote the normalized on the local algebra that assigns $1$  to the identity $\id$. For a given bounded region $\Lambda$, the associated partial trace is defined by
$$
\Tr_{\Lambda]}(a_\Lambda\otimes a_{\Lambda^c}) = a_{\Lambda}\Tr(a_{\Lambda^c}).
$$
 Denote
\begin{equation}\label{pauli} \id^{(u)}=\left(
          \begin{array}{cc}
            1 & 0 \\
            0 & 1 \\
          \end{array}
        \right), \quad
        \         \
\sigma^{(u)}= \left(
          \begin{array}{cc}
            1 & 0 \\
            0 & -1 \\
          \end{array}
        \right).
\end{equation}

For every vertices  $(x,(x,1),(x,2))$  we put
\begin{eqnarray}\label{1Kxy1}
&&K_{<x,(x,i)>}=\exp\{\b H_{x,(x,i)>}\}, \ \ i=1,2,\ \b>0,\\[2mm] \label{1Lxy1} &&
L_{>(x,1),(x,2)<}=\exp\{J\beta H_{>(x,1),(x,2)<}\}, \ \ J>0,
\end{eqnarray}
where
\begin{eqnarray}\label{1Hxy1}
&&
H_{<x,(x,i)>}=\frac{1}{2}\big(\id^{x)}\id^{(x,i)}+\sigma^{(x)}\sigma^{(x,i)}\big),\\[2mm]
\label{1H>xy<1} &&
H_{>(x,1),(x,2)<}=\frac{1}{2}\big(\id^{(x,1)}\id^{(x,2)}+\sigma^{(x,1)}\sigma^{(x,2)}\big).
\end{eqnarray}

The defined model is called  the {\it Ising model with competing
interactions} per vertices  $(u,(u,1),(u,2))$.

Therefore, one finds
\begin{eqnarray}\label{K<u,v>K_0K_3}
&&K_{<u,v>}=K_0\id^{(u)}\id^{(v)}+K_3\sigma^{(u)}\sigma^{(v)},\\[2mm]
&&\label{L>u,v<R_0R_3}
L_{>u,v<}=R_0\id^{(u)}\id^{(v)}+R_3\sigma^{(u)}\sigma^{(v)},
\end{eqnarray}
where
\begin{eqnarray*}
&&K_0=\frac{\exp{\beta}+1}{2},\ \ \   K_3=\frac{\exp{\beta}-1}{2},\\[2mm]
&&R_0=\frac{\exp{(J\beta)}+1}{2}, \ \ \
R_3=\frac{\exp{(J\beta)}-1}{2}.
\end{eqnarray*}
Let
\begin{equation}
A_{(x,(x,1), (x,2))} := K_{<x,(x,1)>}K_{<x,(x,2)>}L_{>(x,1), (x,2)<}
\end{equation}
The operator $A_{(x,(x,1), (x,2))}$ is localized on the algebra of observable associated with the ternary $(x, (x,1), (x,2))$.\\
A simple calculation leads to
\begin{eqnarray}\label{Ax}
A_{(x,(x,1),(x,2))}&=&\gamma\id^{(x)}\otimes\id^{(x,1)}\otimes\id^{(x,2)}+\delta\sigma^{(x)}\otimes\sigma^{(x,1)}\otimes\id^{(x,2)}\nonumber\\[2mm]
&&+
\delta\sigma^{(x)}\otimes\id^{(x,1)}\otimes\sigma^{(x,2)}+\eta\id^{(x)}\otimes\sigma^{(x,1)}\otimes\sigma^{(x,2)},
\end{eqnarray}
where
\begin{equation}
\left\{
  \begin{array}{ll}
    \gamma=K_{0}^{2}R_{0}+K_{3}^{2}.R_{3}=\frac{1}{4}[\exp{(J+2)\beta}+\exp{J\beta}+2\exp{\beta}], \\
    \\
    \delta=K_{0}K_{3}(R_{0}+R_{3})=\frac{1}{4}\exp{J\beta}[\exp{2\beta}-1],  \\
    \\
    \eta=K_{0}^{2}R_{3}+K_{3}^{2}R_{0}=\frac{1}{4}[\exp{(J+2)\beta}+\exp{J\beta}-2\exp{\beta}]. \\
  \end{array}
\right.
\end{equation}

In \cite{MBS161}, we have found that
\begin{equation}\label{dysys}
  \Tr_{u]}\left(A_{(u, (u,1), (u,2))}^{*}\id^{(u)}\otimes h^{(u,1)}\otimes h^{(u,2)}A_{(u, (u,1), (u,2))} \right) = h^{(u)}.
\end{equation}
where $h^{(x)}=h_\a$,
\begin{equation}
h_{\alpha}=
 \left(
  \begin{array}{cc}
    \alpha & 0 \\
    0 & \alpha \\
  \end{array}
\right)
\end{equation}
here $\alpha = \frac{4}{\exp(2J\beta)(\exp(4\beta)+1)+2\exp(2\beta)}.$

Define
\begin{equation}\label{Ealpha}
  \mathcal{E}^{\alpha}_{(u, (u,1), (u,2))}(a) = \Tr_{u]}\left(\alpha^{1/2}A_{(u, (u,1), (u,2))}^{*} a A_{(u, (u,1), (u,2))} \alpha^{1/2}\right).
  \end{equation}
  One can see that the map $\mathcal{E}^{\alpha}$ is an identity preserving transition expectation from $\mathcal{A}_{\{u\}\cup S(u)}$ into
  $\mathcal{A}_{u}$. Then  the map
\begin{equation}\label{Ealpha_x}
  E^{\alpha}_{u} = id_{P(u)}\otimes\mathcal{E}^{\alpha}_{(u, (u,1), (u,2))}
\end{equation}
is a quasi-conditional expectation w.r.t. the triplet $\mathcal{A}_{P(x)}\subset\mathcal{A}_{P(x)\cup \{x\}}\subset\mathcal{A}_{P(x)\cup\{x\}\cup
S(x)}$.

 Let  $\varphi_\alpha$ be the quantum Markov chain associated with $\{  E^{\alpha}_{u}\}$. Then
 \begin{equation}\label{phialpha}
\varphi_{\alpha}(a)=\alpha^{2^{n}-1}\Tr\bigg(a\prod_{i=0}^{n-1}K_{[i,i+1]}K_{[i,i+1]}^{*}\bigg),
\ \ \forall a\in \mathcal{A}_{\Lambda_{[0,n]}}.
\end{equation}
where $K_{[i,i+1]} = \prod_{u\in \Lambda_i}A_{(u, (u,1), (u,2))}$.\\

 \begin{theorem}
   The state $\varphi_\alpha$ is a quantum Markov state in the sense of Definition \ref{QMCdef} associated with the  quasi-conditional  expectations
   (\ref{Ealpha_x}).
 \end{theorem}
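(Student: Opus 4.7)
The plan is to verify the two things Definition \ref{QMS1} (localized version) requires: that each $E^{\alpha}_u$ is genuinely a quasi-conditional expectation with respect to the triplet $\mathcal{A}_{P(u)}\subset\mathcal{A}_{P(u)\cup\{u\}}\subset\mathcal{A}_{P(u)\cup\{u\}\cup S(u)}$, and that $\varphi_\alpha$ satisfies the Markov compatibility
$$
\varphi_\alpha{}_{\lceil \mathcal{A}_{P(u)\cup\{u\}}}\circ E^{\alpha}_u=\varphi_\alpha{}_{\lceil \mathcal{A}_{P(u)\cup\{u\}\cup S(u)}}
$$
for every $u\in V$. That $\varphi_\alpha$ is a QMC in the sense of Definition \ref{QMCdef} is already recorded by the construction leading to \eqref{phialpha}, so only the Markov property in the refined sense is left to establish.

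First I would settle the quasi-conditional expectation property of $E^{\alpha}_u=\mathrm{id}_{P(u)}\otimes \mathcal{E}^{\alpha}_{(u,(u,1),(u,2))}$. Complete positivity is immediate from the Kraus-type formula \eqref{Ealpha} since conjugation by $A_{(u,(u,1),(u,2))}$ followed by a partial trace is CP. The bimodule property over $\mathcal{A}_{P(u)}$ is built into the tensor decomposition. The only non-trivial point is unitality: \eqref{dysys} applied to $h^{(x)}=\alpha\,\id$ yields $\alpha^{2}\,\mathrm{Tr}_{u]}(A^*A)=\alpha\,\id$, hence $\mathrm{Tr}_{u]}(A^*A)=\alpha^{-1}\id$, so $\mathcal{E}^{\alpha}_{(u,(u,1),(u,2))}(\id)=\alpha\,\mathrm{Tr}_{u]}(A^*A)=\id$. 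Consequently $E^{\alpha}_u$ is a quasi-conditional expectation with respect to the prescribed triplet.

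Next I would verify the Markov compatibility using the closed form \eqref{phialpha}. Take $u\in\Lambda_k$ and $a\in\mathcal{A}_{P(u)\cup\{u\}\cup S(u)}$. Choosing $n\geq k+1$, the factor $K_{[k,k+1]}K_{[k,k+1]}^{*}=\prod_{v\in\Lambda_k}A_{(v,(v,1),(v,2))}A_{(v,(v,1),(v,2))}^{*}$ is an ordered product of operators acting on pairwise disjoint ternaries, so the $v\neq u$ factors commute with $a$ and with $A_{(u,(u,1),(u,2))}$. Using cyclicity of the trace one isolates the $u$-contribution and, after applying the fixed-point identity \eqref{dysys} in reverse to absorb the subtree above $u$ into a factor $\alpha\,\id$, one obtains exactly $\varphi_\alpha(E^{\alpha}_u(a))=\varphi_\alpha(a)$, with the powers of $\alpha$ bookkept by the fact that each integration over a site of $\Lambda_{k+1}$ that does not appear in $a$ contributes one factor $\alpha$, matching the overall normalization $\alpha^{2^{n}-1}$.

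The main obstacle is purely technical bookkeeping in the last step: one must propagate the fixed-point identity \eqref{dysys} up the subtree rooted at $u$ while tracking how many sites have been integrated out, to confirm that the powers of $\alpha$ on the two sides agree. Conceptually, however, the argument is exactly the one underlying the construction \eqref{phialpha}: the state $\varphi_\alpha$ is assembled by iterating precisely the local quasi-conditional expectations $E^{\alpha}_u$, so the compatibility of Definition \ref{QMS1} is built into its definition and only needs to be read off from the explicit trace representation.
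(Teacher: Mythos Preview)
Your proposal is correct and follows essentially the same route as the paper's own proof: both arguments expand $\varphi_\alpha$ via the explicit trace formula \eqref{phialpha}, factor $K_{[k,k+1]}K_{[k,k+1]}^{*}$ into the distinguished $u$-ternary and the remaining (disjoint, commuting) ternaries, and invoke the fixed-point identity \eqref{dysys} to collapse each non-$u$ contribution to $\alpha^{-1}\id$, after which the powers of $\alpha$ match and one reads off $\varphi_\alpha\circ E^{\alpha}_u=\varphi_\alpha$ on $\mathcal{A}_{P(u)\cup\{u\}\cup S(u)}$. The paper simply carries out the bookkeeping you flag as the remaining obstacle by taking $n$ minimal (i.e.\ $u\in\Lambda_n$), so that only the sibling vertices in $\Lambda_n$ need to be integrated out; your choice of an arbitrary $n\ge k+1$ is harmless but introduces extra levels that are unnecessary for the argument.
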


\begin{proof}
Since $\varphi_\alpha$ is a quantum Markov chain, it is enough to  show that it satisfies (\ref{Markov_state_eq}). Let $x\in \Lambda_n$ and $o=x_0\sim
x_1\sim\cdots x_n = x$ be the unique simple edge path joining the root $o$ to $x$. One can see that $x_{k+1} \in S(x_k)$ for every $k\in\{0,\cdots,
n-1\}$. Then there exists $i_k\in \{1,2\}$ such that $x_{k+1} = (x_k, i_k)$. For $a = a_0\otimes a_{x_1}\otimes \cdots\otimes a_{x_{n-1}}\otimes a_{x}\otimes
a_{(x,1)}\otimes a_{(x,2)}\in \mathcal{A}_{P(x)\cup\{x\}\cup S(x)}$ one has
\begin{eqnarray*}
  {\varphi_{\alpha}}_{\lceil\mathcal{A}_{P(x)\cup\{x\}\cup S(x)}}(a)  &=& \varphi_\alpha\left(a\otimes \id_{\Lambda_{[0,n+1]}\setminus\{P(x)\cup\{x\}\cup
  S(x)\}}\right)  \\
  &=& \Tr\bigg(\prod_{i=0}^{n}\Tr_{\Lambda_i]}\left(K_{[i,i+1]}a K_{[i,i+1]}^{*}\right)\bigg)\\
    &=& \alpha^{2^{n+1}-1}\Tr\left(\prod_{i=0}^{n}\prod_{x\in \Lambda_i} A_{(u, (u,1), (u,2))} a  \prod_{i=0}^{n}\prod_{x\in \Lambda_i}
    A_{(u, (u,1), (u,2))}^{*}\right)  \\
   &=&  \alpha^{2^{n+1}-1}\Tr\left(\prod_{u\in \Lambda_{[0,n]} }A_{(u, (u,1), (u,2))} a \prod_{u\in \Lambda_{[0,n]}} A_{(u, (u,1), (u,2))}^{*}\right)
\end{eqnarray*}
$$
\Tr_{\Lambda_n]}\left(K_{[n,n+1]} a K_{[n,n+1]}^{*}\right)  =  a_0\otimes a_{x_1}\otimes\cdots\otimes a_{x_{n-1}}\otimes \Tr_{\Lambda_{n}]}\left(K_{[n,n+1]} (a_{x}\otimes
a_{(x,1)}\otimes a_{(x,2)}\otimes\id) K_{[n,n+1]}^{*}\right)$$
$$
= a_0\otimes a_{x_1}\otimes\cdots\otimes a_{x_{n-1}}\otimes\bigotimes_{u\in \Lambda_n}\Tr_{\{u\}}\left(\prod_{u\in \Lambda_n}A_{(u, (u,1), (u,2))} a_{x}\otimes a_{(x,1)}\otimes
a_{(x,2)}\otimes \id A_{(u, (u,1), (u,2))}^{*}\right)
$$
$$
= a_0\otimes a_{x_1}\otimes\cdots\otimes a_{x_{n-1}}\otimes\Tr_{\{x\}}\left(A_{(x, (x,1), (x,2))} a_{x}\otimes a_{(x,1)}\otimes a_{(x,2)} A_{(x, (x,1),
(x,2))}^{*}\right)
$$
$$
\bigotimes_{u\in \Lambda_n\setminus\{x\}}\Tr_{\{u\}}\left(\prod_{u\in \Lambda_n}A_{(u, (u,1), (u,2))} \id^{u}\otimes\id^{(u,1)}\otimes\id^{(u,2)} A_{(u, (u,1),
(u,2))}^{*}\right)
$$
Since $h_\alpha =\alpha\id$ is solution of (\ref{dysys}), one gets
$$ \Tr_{\{u\}}\left(\prod_{u\in \Lambda_n}A_{(u, (u,1), (u,2))} \id^{u}\otimes\id^{(u,1)}\otimes\id^{(u,2)} A_{(u, (u,1), (u,2))}^{*}\right)=
\alpha^{-1}\id^{(u)}.
$$
Then
\begin{eqnarray*}
\Tr_{\Lambda_{[0,n]}}\left(K_{[n,n+1]} a K_{[n,n+1]}^{*}\right)& =&\alpha^{-(|\Lambda_n|-1)} \Tr_{\{x\}}\left(A_{(x, (x,1), (x,2))} a_{x}\otimes a_{(x,1)}\otimes
a_{(x,2)} A_{(x, (x,1), (x,2))}^{*}\right)\\
& =& \alpha^{-(|\Lambda_n|)} \Tr_{\{x\}}\left(\alpha^{1/2}A_{(x, (x,1), (x,2))} a_{x}\otimes a_{(x,1)}\otimes a_{(x,2)} A_{(x, (x,1),
(x,2))}^{*}\alpha^{1/2}\right)\\
&=&\alpha^{-2^{n}}\mathcal{E}^{\alpha}_{(x,(x,1), (x,2))}(a_x\otimes a_{(x,1)}\otimes a_{(x,2)}).
\end{eqnarray*}
Therefore,
\begin{eqnarray*}
 {\varphi_{\alpha}}(a)&=& \alpha^{2^{n+1}-1}\alpha^{-2^{n}}\Tr\left(\prod_{i=0}^{n-1}K_{[i,i+1]} a_{x_0}\otimes \cdots\otimes a_{x_{n-1}}
 \otimes \mathcal{E}^{\alpha}_{(x,(x,1), (x,2))}(a_x\otimes a_{(x,1)}\otimes a_{(x,2)})\prod_{i=0}^{n-1}K_{[i,i+1]}^{*}\right)\\
&=& \alpha^{2^{n}-1}\Tr\left(\prod_{i=0}^{n-1}K_{[i,i+1]} a_{x_0}\otimes \cdots\otimes a_{x_{n-1}}
 \otimes \mathcal{E}^{\alpha}_{(x,(x,1), (x,2))}(a_x\otimes a_{(x,1)}\otimes a_{(x,2)})\otimes\prod_{i=0}^{n-1}K_{[i,i+1]}^{*}\right)\\
&=& {\varphi_{\alpha}}_{\lceil\mathcal{A}_{P(x)\cup \{x\}}}\circ E^{\alpha}_x (a).
\end{eqnarray*}
This completes the proof.
\end{proof}

\begin{remark} The transition expectation $\mathcal{E}^\alpha_x$ acts on the algebra  $\mathcal{A}_{(x, (x,1), (x,2))}$. Its conditional density matrix
$\alpha|A_{(x,(x,1)(x_2)}|^2$ involves both nearest neighbors interactions and competing interactions then it realises a cycle $(x\sim(x,1)\sim(x,2)\sim
x)$. It follows that, the quantum Markov state $\varphi_\alpha$  does not have a one-dimensional representation.
\end{remark}

We point out that in \cite{MBS161} a phase transitions for quantum Markov chains associated with the considered
model has been explored in details.

\section{Structure of Quantum Markov states on trees}\label{sec_stru_QMS}

   Let  $\mathcal{E}_{\{x\}\cup {S}(x)}$ be a transition expectation from $\mathcal{A}_{\{x\}\cup S(x)}$ into $ \mathcal{A}_{x}$.
It follows that the map
\begin{equation}\label{localiz}
\mathcal{E}_{[n,n+1]}:= \bigotimes_{x\in \Lambda_n}\mathcal{E}_{\{x\}\cup {S}(x)}
\end{equation}
is a transition expectation from $\mathcal{A}_{\Lambda_{[n,n+1]}}$ into $\mathcal{A}_{\Lambda_n}$.
 Define
\begin{equation}\label{E_x_ExS(x)}
  E_x = id_{\mathcal{A}_{P(x)}}\otimes \mathcal{E}_{\{x\}\cup {S}(x)}.
\end{equation}
One can see that, for each $x,y\in \Lambda_n$   the maps $E_x$ and $E_y$ commute. Then
\begin{equation}\label{E_n}
  E_n : = \prod_{x\in \Lambda_n}E_x
\end{equation}
is well defined and it satisfies
$$
E_n = id_{\Lambda_{n-1]}}\otimes \mathcal{E}_{\Lambda_{[n,n+1]}}.
$$
\begin{lemma}\label{lem_ExE_n}
  Let $\varphi$ be a localized QMS on $\mathcal{A}_{\Lambda_{[0,n+1]}}$. For the above notations, the following assertions are equivalents
\begin{enumerate}
\item[(i)]  $\varphi\circ E_n = \varphi$.
\item[(ii)]For each $x\in \Lambda_n$ the restriction $\varphi_{\lceil\mathcal{A}_{P(x)\cup \{x\}\cup {S}(x)}}$ satisfies
$$\varphi_{\lceil\mathcal{A}_{P(x)\cup \{x\}\cup {S}(x)}} =\varphi_{\lceil\mathcal{A}_{\{x\}\cup {S}(x)}}\circ E_x.$$
\end{enumerate}
\end{lemma}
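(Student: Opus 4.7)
The plan is to exploit the tensor decomposition $E_n = \mathrm{id}_{\Lambda_{[0,n-1]}}\otimes \bigotimes_{x\in \Lambda_n}\mathcal{E}_{\{x\}\cup S(x)}$ and reduce the equivalence to elementary tensors in $\mathcal{A}_{\Lambda_{[0,n+1]}}$. A preliminary observation is that for distinct $x,y\in \Lambda_n$ the clusters $\{x\}\cup S(x)$ and $\{y\}\cup S(y)$ are disjoint, so the natural extensions of the $E_x$'s to the full algebra act on disjoint tensor factors; hence they commute and $E_n=\prod_{x\in\Lambda_n}E_x$ independently of ordering.

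For (i) $\Rightarrow$ (ii), fix $x\in \Lambda_n$ and take an arbitrary $b\in \mathcal{A}_{P(x)\cup\{x\}\cup S(x)}$, embedded as $b\otimes \id$ in the full algebra. A factor-by-factor evaluation of $E_n(b\otimes\id)$ shows that every $y\in \Lambda_n\setminus\{x\}$ contributes $\mathcal{E}_{\{y\}\cup S(y)}(\id_{\{y\}\cup S(y)})=\id_y$ by identity-preservation of the transition expectations, while the $y=x$ factor produces $E_x(b)$. Therefore $E_n(b\otimes \id)$ equals $E_x(b)$ embedded with identity on the remaining vertices, and applying (i) together with the restriction to $\mathcal{A}_{P(x)\cup\{x\}\cup S(x)}$ yields (ii).

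For (ii) $\Rightarrow$ (i), I would telescope. Enumerate $\Lambda_n=\{x_1,\dots,x_N\}$ and for an elementary tensor $a=\bigotimes_y a_y$ set $a^{(0)}=a$ and $a^{(k)}=E_{x_k}(a^{(k-1)})$, so $a^{(N)}=E_n(a)$. The desired identity reduces to $\varphi(a^{(k)})=\varphi(a^{(k-1)})$ at each step. Since all previously modified clusters $\{x_j\}\cup S(x_j)$ ($j<k$) are disjoint from $\{x_k\}\cup S(x_k)$, the element $a^{(k-1)}$ still factors as $c\otimes b$ with $b=a_{x_k}\otimes\bigotimes_{z\in S(x_k)}a_z\in \mathcal{A}_{\{x_k\}\cup S(x_k)}$ and $c$ supported on the complement, and $E_{x_k}$ only modifies the $b$-factor. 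Splitting $c$ further along $\mathcal{A}_{P(x_k)}$ and invoking hypothesis (ii) at $x_k$ then gives $\varphi(a^{(k)})=\varphi(a^{(k-1)})$ after extending (ii) to account for the remaining environment factors.

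The main obstacle lies precisely in this extension: (ii) asserts the local equality only on the subalgebra $\mathcal{A}_{P(x)\cup\{x\}\cup S(x)}$, whereas the telescoping requires it on arbitrary tensor environments over vertices outside $P(x_k)\cup\{x_k\}\cup S(x_k)$. The key technical ingredient is the localization property \eqref{localiz}: since $E_{x_k}$ acts as the identity on every tensor factor outside $\{x_k\}\cup S(x_k)$, one may expand the environment in a linear basis and apply (ii) coefficient-wise, the compatibility being ensured by the localized QMS structure of $\varphi$. Without localization this bridge breaks, which is exactly why the refined definition of QMS introduced in this paper is essential.
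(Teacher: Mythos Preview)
Your approach is essentially the paper's. For (i)$\Rightarrow$(ii) you restrict to $\mathcal{A}_{P(x)\cup\{x\}\cup S(x)}$ and use identity-preservation on the other clusters; the paper dispatches this as ``straightforward''. For (ii)$\Rightarrow$(i) you telescope through the commuting $E_x$'s; the paper does exactly this in two lines, writing $\varphi\circ E_n=\varphi\circ E_x\circ\prod_{y\neq x}E_y=\varphi\circ\prod_{y\neq x}E_y$ and then ``iterating this procedure''.

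The extension concern you flag in your last paragraph is genuine, and the paper does not address it either: its iteration step silently uses $\varphi\circ E_x=\varphi$ on all of $\mathcal{A}_{\Lambda_{[0,n+1]}}$, whereas (ii) literally gives it only on $\mathcal{A}_{P(x)\cup\{x\}\cup S(x)}$. However, your proposed fix --- expand the environment in a linear basis and apply (ii) coefficient-wise --- does not close the gap as stated, since that would require $\varphi$ to factor across $\mathcal{A}_{P(x)\cup\{x\}\cup S(x)}$ and its tensor complement, which is nowhere assumed. The real justification is the standing hypothesis of the lemma: $\varphi$ is already a \emph{localized QMS}, hence in particular a quantum Markov chain with the global quasi-conditional structure of Definition~\ref{QMCdef}, and it is this ambient structure (not (ii) alone) that underwrites the full-algebra invariance $\varphi\circ E_x=\varphi$. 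So your instinct that ``the localized QMS structure of $\varphi$'' is what saves the day is correct, but the mechanism is the QMC hypothesis rather than a basis-expansion argument.
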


\begin{proof}
  $(i) \Rightarrow (ii)$ is straightforward.
$(ii) \Rightarrow (i)$  Let $x\in \Lambda_n$, one has
$$
\varphi\circ E_{n}= \varphi\circ E_x \circ \prod_{y\in \Lambda_n\setminus \{x\}}E_y = \varphi\circ \prod_{y\in \Lambda_n\setminus \{x\}}E_y.
$$
Iterating this procedure one gets (i).
\end{proof}

\begin{theorem}\label{struc-MS}
Any localized QMS $\varphi$ on $\mathcal{A}_V$ defines a pair
$\{\varphi_0 \ , \ (\mathcal{E}_{\Lambda_{[n+1,n]}})\}$ with the following properties:
\begin{enumerate}
\item[(i)]  $\varphi_0$ is a state on $\mathcal{A}_{o}$ and, for all $\,n\in\mathbb N$,
$\mathcal{E}_{\Lambda_{[n, n+1]}}:\mathcal{A}_{\Lambda_{[n, n+1]}}\to \mathcal{A}_{\Lambda_{n}} $ is a localized
 Markov transition expectation;
\item[(ii)] For every $n\in\mathbb N$, the restriction of $\varphi_{\lceil \mathcal{A}_{[0,n]}}=: \varphi_{\Lambda_{[0,n]}}$ on
$\mathcal{A}_{[0,n+1]}$  is characterized by the property:
$$
\varphi_{\Lambda_{[0,n+1]}}(a_0\otimes a_1\otimes \cdots\otimes a_n)
$$
\begin{equation}\label{5.2.1b}
:=\varphi_0(\mathcal{E}_{\Lambda_{[1,0]}}(a_0\mathcal{E}_{\Lambda_{[2,1]}}(a_1(\cdots \mathcal{E}_{\Lambda_{[n,n-1]}}(a_{n-1}\mathcal{E}_{\Lambda_{[n, n+1]}}(a_n)))
\end{equation}
for any $a_i \in \mathcal{A}_{\Lambda_i}$, $0\leq i\leq n$, is a state such that
$$
\varphi_{[0,0]}:=\varphi_{0}
$$
for any $n\in\mathbb N$ and any $a_i \in \mathcal{A}_{\Lambda_i}$, $0\leq i\leq n$.
\end{enumerate}
Conversely, given a pair $\{\varphi_0 \ , \ (\mathcal{E}_{\Lambda_{[n, n+1]}})\}$ satisfying
conditions (i),(ii) above, there exists a unique localized QMS $\varphi$ on $\mathcal{A}_V$
whose associated pair, according to the first part of the theorem, is
$\{\varphi_0 \ , \ (\mathcal{E}_{\Lambda_{[n, n+1]}})\}$.
\end{theorem}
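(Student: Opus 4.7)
The plan is to establish the two directions separately, using Lemma \ref{lem_ExE_n} as the bridge between the global and the localized Markov property. For the forward direction I set $\varphi_0 := \varphi_{\lceil \mathcal{A}_{\Lambda_0}}$, which is well defined since $\Lambda_0=\{o\}$. Starting from the family $\{E_x\}_{x\in V}$ attached to $\varphi$ by Definition \ref{QMS1}, each $E_x$ factors as $id_{\mathcal{A}_{P(x)}}\otimes \mathcal{E}_{\{x\}\cup S(x)}$ with $\mathcal{E}_{\{x\}\cup S(x)}$ a transition expectation from $\mathcal{A}_{\{x\}\cup S(x)}$ into $\mathcal{A}_x$. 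Forming the tensor product as in (\ref{localiz}) gives $\mathcal{E}_{\Lambda_{[n,n+1]}}:=\bigotimes_{x\in\Lambda_n}\mathcal{E}_{\{x\}\cup S(x)}:\mathcal{A}_{\Lambda_{[n,n+1]}}\to\mathcal{A}_{\Lambda_n}$, establishing (i).

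For (ii), I would proceed by induction on $n$. By Lemma \ref{lem_ExE_n}, the localized Markov property yields $\varphi_{\Lambda_{[0,n+1]}} = \varphi_{\Lambda_{[0,n]}} \circ E_n$ with $E_n = id_{\Lambda_{n-1]}} \otimes \mathcal{E}_{\Lambda_{[n,n+1]}}$. Iterating this identity from $n$ down to $0$, peeling off one level at a time, and using that each intermediate transition expectation commutes with the appropriate factor algebras, rearranges $\varphi_{\Lambda_{[0,n+1]}}(a_0\otimes\cdots\otimes a_n)$ into the nested form (\ref{5.2.1b}). The base case $\varphi_{[0,0]}=\varphi_0$ is immediate.

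For the converse, given a pair $\{\varphi_0, (\mathcal{E}_{\Lambda_{[n,n+1]}})\}$ I would define the finite-volume functionals $\varphi_{\Lambda_{[0,n]}}$ on $\mathcal{A}_{\Lambda_{[0,n]}}$ by (\ref{5.2.1b}). The critical compatibility step is to check the projective relation
\begin{equation*}
\varphi_{\Lambda_{[0,n+1]}}(a\otimes \id_{\Lambda_{n+1}}) = \varphi_{\Lambda_{[0,n]}}(a), \qquad a\in \mathcal{A}_{\Lambda_{[0,n]}},
\end{equation*}
which collapses the innermost layer using $\mathcal{E}_{\Lambda_{[n+1,n+2]}}(\id_{\Lambda_{n+2}})=\id_{\Lambda_{n+1}}$, since the transition expectations are identity preserving. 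Positivity and normalization propagate through the iterated composition because each $\mathcal{E}_{\Lambda_{[k,k+1]}}$ is CP and unital and $\varphi_0$ is a state. Hence the family extends uniquely to a state $\varphi$ on the quasi-local algebra $\mathcal{A}_V$. Finally I verify that $\varphi$ satisfies the localized Markov property (\ref{Markov_state_eq}) with respect to $E_x = id_{\mathcal{A}_{P(x)}}\otimes \mathcal{E}_{\{x\}\cup S(x)}$ by reading off the nested structure of (\ref{5.2.1b}) and invoking Lemma \ref{lem_ExE_n} in the opposite direction; uniqueness follows since (\ref{5.2.1b}) prescribes $\varphi$ on a dense $*$-subalgebra.

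The main obstacle I anticipate is bookkeeping the commutation and tensor-factorization of the maps $\mathcal{E}_{\{x\}\cup S(x)}$ across different vertices $x\in\Lambda_n$ so that the peeling procedure in (\ref{5.2.1b}) is unambiguous. This is precisely where the \emph{localization} hypothesis intervenes: without it, the operators $E_x$ for distinct $x\in\Lambda_n$ need not commute, the product $E_n=\prod_{x\in\Lambda_n}E_x$ of (\ref{E_n}) is not well defined, and the neat Kolmogorov-type decomposition breaks down, in analogy with the obstruction already noted after Theorem \ref{AFt}.
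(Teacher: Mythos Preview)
Your proposal is correct and follows essentially the same route as the paper: for necessity you factor each $E_x$ as $id_{\mathcal{A}_{P(x)}}\otimes\mathcal{E}_{\{x\}\cup S(x)}$, assemble the localized transition expectation $\mathcal{E}_{\Lambda_{[n,n+1]}}$ as the tensor product, invoke Lemma~\ref{lem_ExE_n} to pass to the level-wise invariance $\varphi_{\Lambda_{[0,n+1]}}=\varphi_{\Lambda_{[0,n]}}\circ E_n$, and then iterate downward to obtain the nested formula~(\ref{5.2.1b}); for sufficiency you build the finite-volume states from~(\ref{5.2.1b}), check projective consistency via the unitality of the transition expectations, and recover the localized Markov property for the limit state. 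The paper's own proof carries out exactly these steps, so your outline matches it in both strategy and detail.
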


\begin{proof} Necessity.
Let $\{ E_x\}_x$ be a family of quasi-conditional expectation w.r.t. the triplet $\mathcal{A}_{P(x)\cup \{x\}\cup {S}(x)}\supset \mathcal{A}_{P(x)\cup \{x\}}\supset   \mathcal{A}_{P(x) }$ associated with the quantum Markov state $\varphi$.

From Lemma \ref{lem_ExE_n}, the map $E_{[0,n]}:= \prod_{x\in W_n}E_x$ defines a quasi-conditional expectation w.r.t. the triplet $\mathcal{A}_{\Lambda_{n-1]}}\subset \mathcal{A}_{\Lambda_{[0,n]}}\subset \mathcal{A}_{\Lambda_{[0,n+1]}}$ satisfying $$
\varphi_{\lceil\mathcal{A}_{\Lambda_{[0,n]}}}\circ E_{\Lambda_{[0,n]}} = \varphi_{\lceil\mathcal{A}_{\Lambda_{[0,n+1]}}}.
$$
 Since the map $E_{\Lambda_{[0,n]}}$ acts trivially on the algebra $\mathcal{A}_{\Lambda_{n-1]}}$ then it can be written in the form
$$
E_{\Lambda_{[0,n]}} = id_{\mathcal{A}_{\Lambda_{n-1]}}}\otimes\mathcal{E}_{\Lambda_{[n, n+1]}}
$$
where $\mathcal{E}_{\Lambda_{[n, n+1]}} = {E_{[0,n]}}_{\lceil\mathcal{A}_{\Lambda_{[n,n+1]} }}$.
Similarly, for each $x\in \Lambda_n$ the quasi-conditional expectation $E_x$ has the form $E_x = id_{P(x)}\otimes \mathcal{E}_{\{x\}\cup {S}(x)}$ with $\mathcal{E}_{\{x\}\cup {S}(x)} $ is a transition expectation from $\mathcal{A}_{\{x\}\cup {S}(x)}$ into $\mathcal{A}_{\{x\}}$. Due to the tree structure $\Lambda_{n-1]}= \bigsqcup_{x\in \Lambda_n}P(x)$ and $\Lambda_{[n,n+1]} = \bigsqcup_{x\in\Lambda_n}\{x\}\cup S(x)$ where $\bigsqcup$ means disjoint union. It follows that
$$
E_{\Lambda_{[0,n]}} = \prod_{x\in\Lambda_n}E_x = \id_{\mathcal{A}_{\Lambda_{n-1]}}}\otimes\bigotimes_{x\in \Lambda_n}\mathcal{E}_{\{x\}\cup {S}(x)}.
$$
Therefore   $\mathcal{E}_{\Lambda_{[n, n+1]}}=\bigotimes_{x\in\Lambda_n}\mathcal{E}_{\{x\}\cup {S}(x)}.$ This proves (i).

Let $a = a_0\otimes a_1\otimes\cdots \otimes a_n,  a_i\in \mathcal{A}_{\Lambda_i}$.
\begin{eqnarray*}
\varphi(a)  &=& \varphi\circ E_{[0,n]}(a\otimes \id_{\Lambda_{n+1}})\\
&=& \varphi\circ  E_{n-1]} \circ E_{[0,n]}(a\otimes \id_{\Lambda_{n+1}})\\
&\vdots&\\
&=& \varphi_{0}\circ E_{\Lambda_{0]}}\circ  E_{\Lambda_{1]}}\circ \cdots \circ E_{\Lambda_{n-1]}}\circ E_{\Lambda_{[0,n]}}(a\otimes\id_{\Lambda_{n+1}})\\
&=& \varphi_0(E_{\Lambda_{0]}}(a_{\Lambda_0}\otimes  E_{\Lambda_{1]}}(a_{\Lambda_1} \cdots  E_{\Lambda_{n-1]}}(a_{\Lambda_{n-1}}\otimes E_{\Lambda_{[0,n]}}(a_{\Lambda_{n}}\otimes\id_{\Lambda_{n+1}})))))\\
&=& \varphi_0(\mathcal{E}_{\Lambda_{[0,1]}}(a_{\Lambda_0}\otimes  \mathcal{E}_{\Lambda_{[1,2]}}(a_{\Lambda_1} \cdots  \mathcal{E}_{\Lambda_{[n,n+1]}}(a_{\Lambda_{n-1}}\otimes \mathcal{E}_{\Lambda_{[n,n+1]}}(a_{\Lambda_{n}}\otimes\id_{\Lambda_{n+1}}))))).
\end{eqnarray*}
This proves (ii).

Sufficiency. Let $\{\varphi_0, \mathcal{E}_{\Lambda_{[n, n+1]}} \}$ be a pair satisfying (i) and (ii). By (i), the transition expectation
$\mathcal{E}_{\Lambda_{[n, n+1]}} $ is  localized. There exist transition expectations $\mathcal{E}_{\{x\}\cup {S}(x)}$ such that $\mathcal{E}_{\Lambda_{[n, n+1]}}  = \bigotimes_{x\in \Lambda_n}\mathcal{E}_{\{x\}\cup {S}(x)}$. The map $E_x:= \id_{P(x)}\otimes \mathcal{E}_{\{x\}\cup {S}(x)}$ is a quasi-conditional expectation with respect to the triplet $\mathcal{A}_{P(x)\cup \{x\}\cup {S}(x)}\supset \mathcal{A}_{P(x)\cup \{x\}}\supset   \mathcal{A}_{P(x) }$.
The right hand side of  (\ref{5.2.1b}) defines a unique state $\varphi_{\Lambda_{[0,n]}}$ on the algebra $\mathcal{A}_{\Lambda_{[0,n]}}$
 Let $a_x\in \mathcal{A}_{\{x\}\cup {S}(x)}$
for $y\in \Lambda_{n}\setminus \{x\}$ one has $E_y(a_x) =a_x$.

 Then
$$
{\varphi_{\Lambda_{[0,n]}}}_{\lceil\mathcal{A}_{\{x\}\cup P(x)}}\circ E_x(a_x) =   \varphi_{\Lambda_{[0,n]}}\circ E_{\Lambda_{[0,n]}}(a_x\otimes\id) = {\varphi_{\Lambda_{[0,n]}}}(a_x\otimes \id) ={\varphi_{\Lambda_{[0,n]}}}_{\lceil \mathcal{A}_{\{x\}\cup  {S}(x)}}(a_x).
$$
Let $a\in \mathcal{A}_{\Lambda_{[0,n]}}$, since the maps $\mathcal{E}_{\Lambda_{[n,n+1]}}$ are identity preserving then for $k> n$
\begin{eqnarray*}
\varphi_{\Lambda_{k]}}(a\otimes \id )& = & \varphi_0(\mathcal{E}_{\Lambda_{[1,0]}}(a_0\mathcal{E}_{\Lambda_{[2,1]}}(a_1(\cdots \mathcal{E}_{\Lambda_{[n,n-1]}}(a_{n-1}\mathcal{E}_{\Lambda_{[n, n+1]}}(a_n\otimes \cdots \mathcal{E}_{\Lambda_{[k, k+1]}}(\id)))))))\\
&=&\varphi_0(\mathcal{E}_{\Lambda_{[1,0]}}(a_0\mathcal{E}_{\Lambda_{[2,1]}}(a_1(\cdots \mathcal{E}_{\Lambda_{[n,n-1]}}(a_{n-1}\mathcal{E}_{\Lambda_{[n, n+1]}}(a_n\otimes \cdots \mathcal{E}_{\Lambda_{[k-1, k]}}(\id)))))))\\
& =&\varphi_0(\mathcal{E}_{\Lambda_{[1,0]}}(a_0\mathcal{E}_{\Lambda_{[2,1]}}(a_1(\cdots \mathcal{E}_{\Lambda_{[n,n-1]}}(a_{n-1}\mathcal{E}_{\Lambda_{[n, n+1]}}(a_n))))))\\
&=& \varphi_{\Lambda_{k]}}(a).
\end{eqnarray*}

It follows that, the limit  $\varphi:= \lim_{k}\varphi_{\Lambda_{k]}}$  exists, i.e.
$$
\varphi(a) = \varphi_{\Lambda_{[0,n]}}(a).
$$
The functional $\varphi$ is then a state on the algebra $\mathcal{A}_V$ satisfying
$$
\varphi_{\lceil \mathcal{A}_{\{x\}\cup {S}(x)}} \circ E_x = {\varphi_{\Lambda_{[0,n]}}}_{\lceil \mathcal{A}_{\{x\}\cup {S}(x)}} \circ E_x = \varphi_{\lceil \mathcal{A}_{\{x\}\cup {S}(x)}}.
$$
Therefore, the state  $\varphi$ satisfies (\ref{Markov_state_eq}). This completes the proof.
\end{proof}

Finally, we point out that using the argument of a main result of \cite{MS20}, we can prove the following result.

\begin{theorem}\label{main_thm}
Let $\varphi\in \mathcal S(\mathcal{A}_V)$ be a localized quantum Markov
state. Then there exists a diagonal algebra $\mathcal
D_V\subset\mathcal{A}_V$, a Markov random field $\mu$ on
$\mathrm{spec}(\mathcal D_V)$ and a Umegaki conditional
expectation $\mathfrak E: \mathcal{A_V} \to \mathcal D_V$ such
that
\begin{equation}\label{diag_eq}
\varphi = \varphi_\mu\circ \mathfrak E
\end{equation} where $\varphi_\mu$
is the state on $\mathcal D_V$ corresponding to $\mu$.
\end{theorem}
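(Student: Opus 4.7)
Since the theorem is announced as following from the argument of a main result of \cite{MS20}, the plan is to reduce the global tree statement to a vertex-wise application of the construction used there, leveraging the localized decomposition of $\varphi$ provided by Theorem \ref{struc-MS}. Concretely, I would first apply Theorem \ref{struc-MS} to write $\varphi$ in terms of the initial state $\varphi_0$ and the family $(\mathcal{E}_{\Lambda_{[n,n+1]}})_n$, each of which factorizes, thanks to the localization property, as $\bigotimes_{x\in\Lambda_n}\mathcal{E}_{\{x\}\cup S(x)}$. This reduces the problem to analyzing the individual transition expectations $\mathcal{E}_{\{x\}\cup S(x)}:\mathcal{A}_{\{x\}\cup S(x)}\to\mathcal{A}_x$ attached to a single vertex.

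Next, for each $x\in V$ I would follow the construction of \cite{MS20}: the conditional density operator of $\mathcal{E}_{\{x\}\cup S(x)}$ admits a spectral decomposition whose minimal projections generate a maximal abelian subalgebra $\mathcal{D}_x\subset\mathcal{A}_x$ together with a Umegaki conditional expectation $\mathfrak{E}_x:\mathcal{A}_x\to\mathcal{D}_x$, chosen so that the compression of $\mathcal{E}_{\{x\}\cup S(x)}$ to $\bigotimes_{y\in\{x\}\cup S(x)}\mathcal{D}_y$ becomes a classical Markov kernel from $\prod_{y\in S(x)}\mathrm{spec}(\mathcal{D}_y)$ to $\mathrm{spec}(\mathcal{D}_x)$. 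Setting $\mathcal{D}_V:=\overline{\bigotimes_{x\in V}\mathcal{D}_x}^{C^*}$ and $\mathfrak{E}:=\overline{\bigotimes_{x\in V}\mathfrak{E}_x}$ then yields a global diagonal subalgebra together with a Umegaki conditional expectation onto it; by Gelfand duality the positive functional $(\varphi\circ\mathfrak{E})|_{\mathcal{D}_V}$ corresponds to a probability measure $\mu$ on $\prod_{x\in V}\mathrm{spec}(\mathcal{D}_x)$, and the identity $\varphi=\varphi_\mu\circ\mathfrak{E}$ follows on local elements from the explicit formula (\ref{5.2.1b}) and extends by density.

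The main obstacle I anticipate is verifying that $\mu$ is genuinely a Markov random field on the tree, since, in contrast with the linear geometry of \cite{MS20}, a finite region $\Lambda\subset V$ may have several outgoing boundary vertices at the same level, and the classical conditioning must be coherent with the vertex-wise quantum Markov property (\ref{Markov_state_eq}). This is exactly where the localization of Definition \ref{QMS1} pays off: the commutation of the local quasi-conditional expectations $E_x,E_y$ for $x\neq y$ at the same level, exploited in Lemma \ref{lem_ExE_n}, ensures that the restriction of the family $\{\mathcal{E}_{\{x\}\cup S(x)}\}$ to the diagonal is a product of independent classical Markov kernels from each vertex to its direct successors. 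Once this factorization is in place, the Dobrushin Markov property of $\mu$ follows by induction on the levels $\Lambda_n$: the conditional distribution on any finite $\Lambda$ given its complement depends only on the tree-boundary $\partial\Lambda$, and the theorem follows.
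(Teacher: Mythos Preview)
Your proposal is correct and aligns with the paper's own treatment: the paper does not give a proof of this theorem at all, merely stating that it follows ``using the argument of a main result of \cite{MS20}''. Your sketch is precisely an elaboration of how that reduction goes---using the structure theorem (Theorem~\ref{struc-MS}) to factorize the transition expectations vertex-wise, then invoking the diagonalization construction of \cite{MS20} at each vertex and patching the resulting data into a global diagonal algebra, Umegaki expectation, and classical Markov field---so you have supplied the details the paper omits rather than taken a different route.
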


We notice that the diagonalizability result for translation invariant quantum
Markov states first appeared in \cite{GZ} for homogeneous processes
on the forward chains. In \cite{FM} the
proof of diagonalizability has been proved for one-dimensional non-homogeneous QMS. Our result will allow to investigate
QMS over networks which will be a topic of our coming investigations.

\section{(Sub) Quantum Markov states on trees}\label{sec_SubMS}

Let us, before start this section, notice that any restriction of QMS in the sense of Definition \ref{QMS1}, is not necessarily a QMS. It turns out that localized QMS has the mentioned property. To establish such a property, let us recall ceratin auxiliary definitions.

Let $T^{'} = (V^{'}, E^{'})$ be a subtree of the tree. There exists a unique vertex $o'\in V'$ such that $d(o, V') = d(o,o')$. This vertex $o'$ will be referred as a root of the subtree $T^{'}$.
\begin{definition}
  Let $\varphi$ be a localized QMS on $\mathcal{A}_V$. The restriction of $\varphi$ on the algebra $\mathcal{A}_{V'}$ is called Sub-QMS  associated with the subtree $T'$.
\end{definition}

\begin{theorem}\label{aq}
  Any sub-QMS is itself a localized QMS.
\end{theorem}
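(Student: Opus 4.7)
The plan is to invoke the sufficiency half of Theorem~\ref{struc-MS}: to exhibit a state $\varphi'_0$ on $\mathcal{A}_{o'}$ together with a family of localized transition expectations $\mathcal{E}'_{\Lambda'_{[n,n+1]}}$ on the rooted subtree $(T',o')$ which reproduce $\varphi_{\lceil\mathcal{A}_{V'}}$ via formula~(\ref{5.2.1b}). Write $\Lambda'_n$ for the $n$-th level of $T'$ seen from $o'$ and $S'(x)$ for the set of direct $T'$-successors of $x\in V'$. I would first record two elementary combinatorial facts about the embedding $T'\subset T$: (a) for every $x\in V'$, the unique $T$-path from $o$ to $x$ passes through $o'$, from which it follows that the $T$- and $T'$-successor relations coincide inside $V'$, namely $S'(x)=S(x)\cap V'$; and (b) if $x\in V\setminus V'$, then $S(x)\cap V'=\emptyset$. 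Both follow from the uniqueness of paths in the tree $T$, the connectedness of $T'$, and the minimality built into the definition $d(o,o')=d(o,V')$.

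Next I would construct the candidate data. For each $x\in V'$ put
\begin{equation*}
\mathcal{E}'_{\{x\}\cup S'(x)}(a):=\mathcal{E}_{\{x\}\cup S(x)}\bigl(a\otimes \id_{S(x)\setminus S'(x)}\bigr),\qquad a\in\mathcal{A}_{\{x\}\cup S'(x)},
\end{equation*}
which is completely positive and unital, hence a transition expectation from $\mathcal{A}_{\{x\}\cup S'(x)}$ into $\mathcal{A}_{x}$. Let $\mathcal{E}'_{\Lambda'_{[n,n+1]}}:=\bigotimes_{x\in\Lambda'_n}\mathcal{E}'_{\{x\}\cup S'(x)}$ be the associated localized transition expectation for $T'$ in the sense of~(\ref{localiz}), and set $\varphi'_0:=\varphi_{\lceil\mathcal{A}_{o'}}$.

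The core verification is that~(\ref{5.2.1b}) applied to $\{\varphi'_0,(\mathcal{E}'_{\Lambda'_{[n,n+1]}})\}$ produces $\varphi_{\lceil\mathcal{A}_{V'}}$. For $a\in\mathcal{A}_{\Lambda'_{[0,m]}}$ I would expand $\varphi(a\otimes \id)$ using~(\ref{5.2.1b}) for $\varphi$ and evaluate the nested $\mathcal{E}$'s from the innermost outward. At each $T$-level $n\geq d(o,o')$ the factorization $\mathcal{E}_{\Lambda_{[n,n+1]}}=\bigotimes_{x\in\Lambda_n}\mathcal{E}_{\{x\}\cup S(x)}$ splits the computation vertex by vertex: by fact (b), every $x\in\Lambda_n\setminus V'$ receives only identities as inputs, so its $\mathcal{E}_{\{x\}\cup S(x)}$ collapses to $\id_x$ by unitality, while for every $x\in\Lambda_n\cap V'$ the corresponding factor reduces to $\mathcal{E}'_{\{x\}\cup S'(x)}$ by the definition of $\mathcal{E}'$. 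At the levels $n<d(o,o')$ no vertex of $V'$ appears, so these layers merely transport the resulting element $c\in\mathcal{A}_{o'}$ from $\mathcal{A}_{o'}$ down to $\mathcal{A}_o$ through quasi-conditional expectations acting trivially on $c$; evaluating $\varphi_0$ then reproduces the marginal $\varphi_{\lceil\mathcal{A}_{o'}}(c)=\varphi'_0(c)$. Since $c$ is precisely the $T'$-recursion $\mathcal{E}'_{\Lambda'_{[0,1]}}(a_0\,\mathcal{E}'_{\Lambda'_{[1,2]}}(a_1\cdots))$ evaluated at $a$, the two expressions agree, and the sufficiency part of Theorem~\ref{struc-MS} then furnishes a localized QMS on $\mathcal{A}_{V'}$ coinciding with $\varphi_{\lceil\mathcal{A}_{V'}}$.

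The main obstacle is managing the level offset when $o'\neq o$: the $T$- and $T'$-recursions in~(\ref{5.2.1b}) are shifted by $d(o,o')$ layers, so one must verify that these extra bottom layers of $T$ act trivially on the part of the observable supported in $V'$. Fact (b), together with the unitality of each $\mathcal{E}_{\{x\}\cup S(x)}$, is precisely what neutralises those layers, so that their aggregate effect is encoded purely in the marginal $\varphi'_0$. This is where the localization hypothesis on $\varphi$ is used essentially; in its absence the tensor factorization~(\ref{localiz}) is unavailable and the vertex-by-vertex collapse fails.
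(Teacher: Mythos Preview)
Your argument is essentially correct and reaches the conclusion, but it follows a different route from the paper's proof. The paper proceeds directly from Definition~\ref{QMS1}: for each $x\in V'$ it simply restricts the given quasi-conditional expectation $E_x$ (acting on $\mathcal{A}_{P(x)\cup\{x\}\cup S(x)}$) to $\mathcal{A}_{P'(x)\cup\{x\}\cup S(x)}$, where $P'(x)=P(x)\cap V'$, and checks that~(\ref{Markov_state_eq}) survives the restriction. This is a two-line verification, whereas you pass through the structure theorem (Theorem~\ref{struc-MS}) and rebuild $\varphi_{\lceil\mathcal{A}_{V'}}$ from a new initial state and a new family of localized transition expectations. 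Your approach is longer but buys genuine extra generality: by introducing $\mathcal{E}'_{\{x\}\cup S'(x)}(a)=\mathcal{E}_{\{x\}\cup S(x)}(a\otimes\id_{S(x)\setminus S'(x)})$ you handle subtrees that are not forward-closed (i.e.\ for which $S'(x)\subsetneq S(x)$ may occur), a case the paper's argument, as written, does not cover since it keeps the full $S(x)$ in the triplet. In the paper the theorem is only applied to the forward subtrees $T_x$, where $S'(y)=S(y)$ for every $y$, so this distinction is invisible in practice.

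One small correction: your fact~(b) as stated is false---if $o'\neq o$, its $T$-predecessor lies in $V\setminus V'$ yet has $o'\in S(x)\cap V'$. What you actually need, and what your proof uses, is the weaker statement that $S(x)\cap V'=\emptyset$ for $x\in\Lambda_n\setminus V'$ with $n\ge d(o,o')$; this does hold (any $y\in S(x)\cap V'$ would force the $T'$-path from $o'$ to $y$ to pass through $x$, placing $x$ in $V'$). Also, the phrase ``quasi-conditional expectations acting trivially on $c$'' in the final transport step is slightly misleading: the maps $\mathcal{E}_{\{x\}\cup S(x)}$ below level $d(o,o')$ do act nontrivially on $c$, but their nested composition followed by $\varphi_0$ is exactly the formula~(\ref{5.2.1b}) evaluated on $c^{(o')}\in\mathcal{A}_{o'}$, hence returns the marginal $\varphi_{\lceil\mathcal{A}_{o'}}(c)=\varphi'_0(c)$, which is what you want.
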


\begin{proof}
 Let  $T^{'} = (V^{'}, E^{'})$ be a subtree with root $o^{'}$. Let $\varphi$ be a Markov state on the algebra $\mathcal{A}_V$  associated with a family $\{E_x\}_{x\in V}$ of quasi-contional expectation w.r.t. the triplet $\mathcal{A}_{P(x) }\subset \mathcal{A}_{P(x)\cup \{x\}}\subset \mathcal{A}_{P(x)\cup \{x\}\cup {S}(x)}$. Let $x\in V^{'}$. The vertex $o^{'}$ belongs to the unique edge path joining the root $o$ and the vertex $x$. Then the set $P'(x) = P(x)\cap V^{'}$ consists of the elements of the edge path joining $o'$ and $x$. Therefore, the restriction $E^{'}_x$ of $E_x$ on the algebra $\mathcal{A}_{P^{'}(x) \cup \{x\}\cup{S}(x)}$ is a quasi-conditional expectation with respect to the triplet  $\mathcal{A}_{P^{'}(x) }\subset \mathcal{A}_{P^{'}(x)\cup \{x\}}\subset \mathcal{A}_{P^{'}(x)\cup \{x\}\cup {S}(x)}$ and it satisfies
 \begin{equation}\label{phi'MP}
 \varphi^{'}_{\lceil \mathcal{A}_{P^{'}(x)\cup \{x\}}}\circ E^{'}_x =  \varphi^{'}_{\lceil \mathcal{A}_{P^{'}(x)\cup \{x\}\cup {S}(x)}}.
 \end{equation}
 It follows that the state $\varphi^{'}$ is a QMS with the family $\{E^{'}_x\}_{x\in V^{'}}$.
\end{proof}

In the sequel, we reduce ourselves to  the case of regular trees (the Cayley trees). The Cayley tree of order $k$ is characterized by being a tree for which every vertex has exactly $k+1$ nearest-neighbors. We consider the semi-infinite Cayley tree $\Gamma^k_+= (V,E)$ with root $o$. In this case, any vertex has exactly $k$ direct successors denoted $(x,i), i=1,2,\cdots, k.$
$$
\overrightarrow{S}(x) = \{(x,1), (x,2), \cdots, (x,k)\}
$$
It follows that the coordinate structure on the tree gives
$$
\Lambda_n = \{ (i_1, i_2, \cdots, i_n); \quad i_j = 1,2, \cdots, k \}.
$$
The coordinate structure of the semi-infinite Cayley tree allows to introduce a shift on it (see \cite{fannes}). Namely, for $x = (i_1, i_2, \cdots, i_n) \in \Lambda_n $, we define $k$ shifts on the tree as follows
\begin{equation}\label{shifts}
  \alpha_j(x) = (j, x)= (j, i_1, i_2, \cdots, i_n)\in \Lambda_{n+1}.
\end{equation}

$$
\alpha_x := \alpha_{i_1}\circ \alpha_{i_2}\circ\cdots\circ \alpha_{i_n}.
$$
The shift $\alpha_x$ maps the semi-infinite Cayley tree $\Gamma^k_+$ onto  its subtree $T_x$ defined by (\ref{S_1n_S_infty}).

\begin{remark}
We point out that on  the (regular) Cayley tree, one can define translations via free group structure (see \cite[Chapter 1]{Roz} for details). However, in our setting, this can not be applied, since the considered semi-infinite Cayley has a root.
\end{remark}

 One has $\alpha_x(o) = x$ and $\alpha_x(\Lambda_n)= S_n(x)$.
The shifts $\alpha_j$ can be extended to  the algebra $\mathcal{A}_V$ as follows:
\begin{equation}\label{shift_algebra}
\alpha_j\left( \bigotimes_{ x\in \Lambda_{[0,n]}}a_x \right):=
 \id^{(o)}\otimes\bigotimes_{ x\in \Lambda_{[0,n]}} a_x^{(j,x)}.
\end{equation}
\begin{definition}
  A state $\varphi$ on $\mathcal{A}_{L}$ is said to be translation invariant if
  \begin{equation}\label{trans_inv_state}
    \varphi\circ \alpha_j = \varphi
  \end{equation}
 for every $j \in \{1,2,\cdots, k\}$.
\end{definition}

\begin{theorem}\label{thm_tinv_qmc} Let $\varphi$ be a localized QMS associated with a family $\{\mathcal{E}_{\{x\}\cup {S}(x)}$ of transition expectations. The following assertions are equivalent.
\begin{description}
  \item[(i)] The state $\varphi$ is translation invariant.
  \item[(ii)] The sub Markov state $\varphi_{T_x}$ on the subtree with vertex set $T_x$  given by (\ref{S_1n_S_infty}) satisfies
  \begin{equation}\label{phitxa=phia}
  \varphi_{T_x}(\alpha_x(a)) = \varphi(a)
  \end{equation}
  for all $a\in \mathcal{A}_V$.
 \item[(iii)]  There exists a completely positive identity preserving map $\mathcal{E}: M^{\otimes (k+1)} \to M$ such that  the transitions expectations $\mathcal{E}_{\{x\}\cup {S}(x)}$ are copies of $\mathcal{E}$ \\
      i.e.
      \begin{equation}\label{E=alpha(E0)}
      \mathcal{E}_{\{x\}\cup {S}(x)}\circ\alpha_x(a) = \alpha_x\circ \mathcal{E}_{\{o\}\cup {S}(o)}(a)
      \end{equation}
      for all $a\in \mathcal{A}_{\Lambda_{1]}}$.
\end{description}
\end{theorem}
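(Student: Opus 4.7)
The plan is to prove the chain $(i)\Rightarrow(ii)\Rightarrow(iii)\Rightarrow(i)$, relying on the uniqueness in the structure theorem (Theorem~\ref{struc-MS}) together with the sub-QMS result (Theorem~\ref{aq}). The equivalence $(i)\Leftrightarrow(ii)$ is essentially combinatorial: for $x=(i_1,\ldots,i_n)\in\Lambda_n$, factor the shift as $\alpha_x=\alpha_{i_1}\circ\cdots\circ\alpha_{i_n}$ and iterate translation invariance to obtain $\varphi\circ\alpha_x=\varphi$; since $\alpha_x(a)$ is supported on the subtree rooted at $x$, one gets $\varphi_{T_x}(\alpha_x(a))=\varphi(\alpha_x(a))=\varphi(a)$. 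Conversely, applying (ii) at a first-level vertex $x=j$ and using the same support argument yields $\varphi\circ\alpha_j=\varphi$.

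For $(ii)\Rightarrow(iii)$, Theorem~\ref{aq} gives that $\varphi_{T_x}$ is itself a localized QMS on the subtree rooted at $x$, with associated family of transition expectations $\{\mathcal{E}_{\{y\}\cup S(y)}\}$. Since $\alpha_x:\Gamma^k_+\to T_x$ is an isomorphism of rooted trees, the transported state $\varphi_{T_x}\circ\alpha_x$ is a localized QMS on $\mathcal{A}_V$ associated, in the sense of Theorem~\ref{struc-MS}, with the pair
$$
\bigl(\varphi_{T_x,\,\lceil\mathcal{A}_x}\circ\alpha_x,\ \{\alpha_x^{-1}\circ\mathcal{E}_{\{\alpha_x(y)\}\cup S(\alpha_x(y))}\circ\alpha_x\}_{y\in V}\bigr).
$$
Hypothesis (ii) says this QMS coincides with $\varphi$, so the uniqueness half of Theorem~\ref{struc-MS} forces the two families of transition expectations to match; taking $y=o$ yields $\mathcal{E}_{\{x\}\cup S(x)}\circ\alpha_x=\alpha_x\circ\mathcal{E}_{\{o\}\cup S(o)}$, which is (iii) with $\mathcal{E}:=\mathcal{E}_{\{o\}\cup S(o)}$.

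For $(iii)\Rightarrow(i)$, I would verify the invariance of $\varphi$ through the tower formula \eqref{5.2.1b}. Starting from $\varphi(\alpha_j(a))$, a recursive application of (iii) converts each localized transition expectation into a shift of $\mathcal{E}$, and the identity-preserving property $\mathcal{E}(\1)=\1$ contracts all branches lying outside the subtree at $j$ that carries the support of $\alpha_j(a)$; the remaining nested expression then matches the one for $\varphi(a)$. The main obstacle is the reconciliation of the initial state with the shift: one must establish the compatibility $\varphi_0\circ\mathcal{E}_j=\varphi_0$, where $\mathcal{E}_j(b):=\mathcal{E}(\1\otimes\cdots\otimes b\otimes\cdots\otimes\1)$ places $b$ in the $j$-th slot of $\mathcal{E}$. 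This is the quantum counterpart of the classical requirement that the root distribution be invariant under the marginal transition kernels, and it is extracted by evaluating $\varphi$ on $\mathcal{A}_o$ and $\mathcal{A}_j$ via \eqref{5.2.1b}, collapsing the tails through $\mathcal{E}(\1)=\1$, and exploiting (iii) together with an induction on the depth of the support of $a$.
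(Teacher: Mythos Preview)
Your treatment of $(i)\Leftrightarrow(ii)$ and $(ii)\Rightarrow(iii)$ is essentially the paper's argument: for the former, iterate $\varphi\circ\alpha_{i_\ell}=\varphi$ along the coordinate path to $x$ and read off the converse at depth one; for the latter, transport the sub-QMS $\varphi_{T_x}$ back to the root via $\alpha_x$ and invoke the uniqueness in Theorem~\ref{struc-MS} to identify the two families of transition expectations. The paper organises the equivalences as $(i)\Leftrightarrow(ii)$ and $(ii)\Leftrightarrow(iii)$ rather than your cycle, but the substance coincides.

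Where your proposal is incomplete is the handling of the initial state in the closing implication. You correctly isolate the requirement $\varphi_0\circ\mathcal{E}_j=\varphi_0$, but the extraction you sketch is circular: evaluating $\varphi$ on $\mathcal A_o$ via \eqref{5.2.1b} returns $\varphi_0$, while evaluating $\varphi$ on $\alpha_j(\mathcal A_o)\subset\mathcal A_{\Lambda_1}$ returns $\varphi_0\circ\mathcal E_j$, and nothing in (iii) forces these two to agree---condition (iii) constrains only the transition expectations. No induction on depth can manufacture the missing stationarity; indeed for $k=1$ any non-stationary $\varphi_0$ paired with a single $\mathcal E$ yields a localized QMS satisfying (iii) but not (i). It is worth noting that the paper's own computation for $(iii)\Rightarrow(ii)$ passes silently over exactly this point: the replacement of $\varphi_x\circ\alpha_x$ by $\varphi_0$ in the final display uses the very identity you flagged, without justification. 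So the obstacle you name is real and is not resolved in the paper either; as stated, (iii) needs the supplementary hypothesis that $\varphi_0$ is stationary under each marginal $\mathcal E_j$.
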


\begin{proof}
(i)$\Leftrightarrow$ (ii) Let $x\in L$, according to the above defined coordinate structure $x = (i_1, i_2, \cdots, i_n)$ where $n = d(x,o)$ and $i_1, i_2, \cdots, i_n\in \{1,2,\cdots, i_k\}$. Since $\varphi_{T_x}$ is the restriction of $\varphi$ on $\mathcal{A}_{T_x}$ then  (\ref{trans_inv_state}) leads to
\begin{eqnarray*}
\varphi_{T_x}(\alpha_x(a)) &=& \varphi\circ \alpha_{i_1}\circ\alpha_{i_2}\circ\cdots \alpha_{i_{n-1}}\circ \alpha_{i_n}(a)\\
&=& \varphi\circ\alpha_{i_2}\circ\cdots \alpha_{i_{n-1}}\circ \alpha_{i_n}(a)\\
&\vdots&\\
&=& \varphi\circ \alpha_{i_n}(a)\\
&=& \varphi(a).
\end{eqnarray*}
Conversely, by applying (\ref{phitxa=phia}) on elements of $\Lambda_1$ one gets (\ref{trans_inv_state}).\\
(ii)$\Leftrightarrow$ (iii)
 By (\ref{phitxa=phia}) for each $b\in \mathcal{A}_o$  one has
$$
\varphi_x(\alpha_x(b)) = \varphi_o(b)
$$
and
$$
\varphi_o \circ\mathcal{E}_{\{o\}\cup {S}(o)}(a) = \varphi_{T_x}(\alpha_x(\mathcal{E}_{\{o\}\cup {S}(o)}(a)))
= \varphi_{x}(\alpha_x(\mathcal{E}_{\{o\}\cup {S}(o)}(a)))
$$
The map
\begin{equation}
\tilde{\mathcal{E}}_{\{x\}\cup{S}(x)}(a_{\{x\}\cup{S}(x)})
 := \alpha_x(\mathcal{E}_{\{o\}\cup {S}(o)}(\alpha_x^{-1}( a_{\{x\}\cup{S}(x)})))
\end{equation}
realizes a quasi-conditional expectation from $\mathcal{A}_{\{x\}\cup {S}(x)}$ into $\mathcal{A}_x$.
 And it satisfies
 \begin{eqnarray*}
 \varphi_{x}\circ \tilde{\mathcal{E}}_{\{x\}\cup{S}(x)}(a_{\{x\}\cup{S}(x)})
 &=& \varphi_{T_x}\circ\alpha_x(\mathcal{E}_{\{o\}\cup {S}(o)}(\alpha_x^{-1}( a_{\{x\}\cup{S}(x)})))\\
&=& \varphi\circ \mathcal{E}_{\{o\}\cup {S}(o)}(\alpha_x^{-1}( a_{\{x\}\cup{S}(x)}))\\
&=& \varphi(\alpha_x^{-1}( a_{\{x\}\cup{S}(x)}))\\
&=& \varphi_{T_x}(\alpha_x(\alpha_x^{-1}( a_{\{x\}\cup{S}(x)})))\\
&=& \varphi_{T_x}( a_{\{x\}\cup{S}(x)}).
 \end{eqnarray*}
Therefore, the pair $\{\varphi_o, \tilde{\mathcal{E}}_{\{x\}\cup{S}(x)})\}$ is associated with the Markov state $\varphi$ in addition to the
pair $\{\varphi_o, \mathcal{E}_{\{x\}\cup S(x)}\}$. From Theorem \ref{5.2.1b} we conclude that the two pair coincide. This means that
 $$\mathcal{E}_{\{x\}\cup S(x)}= \alpha_x\circ\mathcal{E}_{\{o\}\cup {S}(o)}\circ\alpha_x^{-1}$$
This leads to (iii).\\
Conversely, let $a=a_{\Lambda_o}\otimes a_{\Lambda_1}\otimes\cdots a_{\Lambda_n}\in\mathcal{A}_{\Lambda_{[0,n]}}$.
 Since $\alpha_x(\Lambda_j) = S_{j}(x)$, by (\ref{E=alpha(E0)}) one has
\begin{eqnarray*}
\varphi_{T_x}(\alpha_x(a))&=& \varphi_{T_x}\left(\mathcal{E}_{S_{[0,1]}(x)}\left(\alpha_x(a_{\Lambda_o})\otimes
 \mathcal{E}_{S_{[1, 2]}(x)}\left(\alpha_x(a_{\Lambda_1})\cdots \mathcal{E}_{S_{[n-1,1n]}(x)}\left(\alpha_x(a_{\Lambda_n})
 \otimes\id\right)\right)\right)\right)\\
 &=&\varphi_x\left( \alpha_{x}(\mathcal{E}_{\Lambda_{[0,1]}})\left(\alpha_x(a_{\Lambda_0})\otimes\alpha_x(\mathcal{E}_{\Lambda_{[1,2]}})\left(
 \alpha_x(a_{\Lambda_1})\cdots \alpha_x(\mathcal{E}_{ \Lambda_{[n-1,n]}}) \left(\alpha_x(a_{\Lambda_n})\otimes\id\right)\right)\right)\right)\\
 &=&\varphi_x\circ\alpha_{x}\left( \mathcal{E}_{\Lambda_{[0,1]}} \left(a_{\Lambda_0}\otimes\mathcal{E}_{\Lambda_{[1,2]}}\left(
 a_{\Lambda_1}\cdots\mathcal{E}_{ \Lambda_{[n-1,n]}}  \left(a_{\Lambda_n}\otimes\id\right)\right)\right)\right)\\
 &=& \alpha_{0}\left( \mathcal{E}_{\Lambda_{[0,1]}} \left(a_{\Lambda_0}\otimes\mathcal{E}_{\Lambda_{[1,2]}}\left(
 a_{\Lambda_1}\cdots\mathcal{E}_{ \Lambda_{[n-1,n]}}  \left(a_{\Lambda_n}\otimes\id\right)\right)\right)\right)\\
 &=&\varphi(a).
 \end{eqnarray*}
 This finishes the proof.
 \end{proof}

 \begin{remark}
   According to  Theorem \ref{thm_tinv_qmc}, a translation invariant localized quantum Markov state on the Cayley tree is characterized by a pair $(\varphi_o,  \mathcal{E})$ of initial state on the algebra $\mathcal{A}_o$ and a transition expectation $\mathcal{E}$ from $ M^{\otimes n+1} $ into $M$.
 \end{remark}

\section{Acknowledgments}{The authors gratefully acknowledge Qassim University, represented by the Deanship of Scientific Research,
 on the financial support for this research under the number (10173-cba-2020-1-3-I) during the academic year 1442 AH / 2020 AD.}

\end{document}